\pdfoutput=1   
\documentclass[10pt]{article}
\usepackage[preprint]{tmlr}

\usepackage{amsmath,amssymb,amsthm}
\usepackage{booktabs}
\usepackage{graphicx}
\usepackage{xcolor}
\usepackage{url}
\usepackage{hyperref}
\hypersetup{hidelinks}   

\newtheorem{lemma}{Lemma}
\newtheorem{theorem}{Theorem}
\newcommand{\Piodd}{\Pi_{\mathrm{odd}}}
\newcommand{\Pieven}{\Pi_{\mathrm{even}}}
\newcommand{\expct}[1]{\langle #1 \rangle}

\title{Parity Floors in Quantum Denoisers: A Closed-Form Benchmark for Fixed-Map Denoising Networks}

\author{\name Jaeuk Kim \email freak91uk@hnextits.com \\ \addr NextITS, Seoul, Republic of Korea}

\begin{document}
\maketitle

\begin{abstract}
Fixed quantum feature maps are increasingly inserted into diffusion denoisers, but
standard image benchmarks do not reveal \emph{which} structural constraint limits
them. We introduce \textbf{CoupledPhaseTexture}, a torus-diffusion benchmark with
analytic heat-kernel noising that separates \emph{parity}, \emph{within-sector
approximation}, and \emph{sample-complexity} limitations. For the depth-1
$\mathrm{RY}$+CNOT+Pauli-$Z$ family we prove a containment-free parity floor: all
reachable features are even functions of the encoded angles while the sine
components of the Bayes denoiser are odd, so the \emph{excess} risk splits exactly
into an inaccessible odd part and a within-sector residual,
$R_\sigma(Q)-R^\star_\sigma=\|\Piodd m_\sigma\|^2+\inf_{q\in Q}\|\Pieven m_\sigma-q\|^2$.
The first term is an irreducible, $\sigma$-resolved lower bound holding for
\emph{every} even feature class, with no containment, linearity, or closedness
assumption on $Q$. The obstruction is a property of the noise-conditioned denoising
target rather than static representability: the floor is re-derived at each noise
scale because the target's parity content changes with $\sigma$. The measured excess
is dominated by the parity proxy on two distinct priors. Higher-order $Z$ readouts
improve the even sector, but entanglement does not lower the floor and re-uploading
does not reliably close it. Classical controls confirm the deficit is parity rather
than quantumness: a cosine-only bank is floored similarly, while adding the sine
sector matches the reference. Among tested constructions---odd readouts, a
noise-coupled encoder---none matches the sine-carrying classical bank. These results
motivate nonclassical data access or feature classes without efficient classical
surrogates; they do not establish either as sufficient for quantum advantage.
\end{abstract}

\section{Introduction}
Empirical claims that a small quantum circuit helps a generative model are hard
to interpret: inserting \emph{any} bottleneck helps a baseline, and the quantum
arm is rarely matched to a classical control of equal capacity, dimension, or
spectral reach. When the circuit is classically simulable, the interesting
question is not asymptotic advantage but \emph{which structural constraint of the
map limits it on the task}---a question the prevailing image-benchmark practice
cannot answer, because a raw FID number does not separate ``the map cannot
represent the target'' from ``the map overfits'' from ``the baseline was weak.''
A prior fair-comparison study of \emph{variational} quantum cores in image
diffusion \citep{paper1} established parity with matched classical controls but,
being an image bake-off, left the mechanism unresolved.

We give a controlled, closed-form instrument. Our object of study is the
\emph{fixed} quantum feature map---an angle-encoded circuit with no trainable
gate parameters, read out as $\expct{Z_i}$ and $\expct{Z_iZ_j}$---and our testbed
is a synthetic diffusion process on the torus whose Bayes-optimal denoiser is
computable to reference accuracy. On it, for parity-invariant fixed maps the
measured excess admits a clean diagnostic attribution, and one component is a
theorem. By ``closed-form'' we mean the
forward heat-kernel noising, the parity decomposition, and the odd-sector lower
bound are analytic; the numerical risk tables use a finite-sample numerical
trigonometric reference (fit at large $N$), not an analytic oracle.

\paragraph{Contributions.}
\begin{enumerate}
\item \textbf{CoupledPhaseTexture} (Sec.~\ref{sec:bench}): a torus-diffusion
benchmark with analytic heat-kernel noising, a \emph{selected} finite-sample
numerical reference (degree-3, selected on its performance at representative noise
levels among degrees $2$--$5$ and retained for protocol consistency; not the
pointwise minimizer at every $\sigma$), and a matched, capacity-noted baseline suite.
\item \textbf{A containment-free parity error-floor theorem} (Sec.~\ref{sec:theory}):
single-layer angle+$Z$ maps are confined to the even sector; hence
$R_\sigma(Q)-R^\star_\sigma\ge\|\Piodd m_\sigma\|^2$ for every $\sigma$, with the
right-hand side an exact $\sigma$-resolved lower bound equal to the odd-sector $L^2$
mass of the heat-kernel-smoothed Bayes denoiser. It is
the $\mathrm{RY}$/diffusion specialization of Fourier-view dequantization
\citep{schuld2021fourier}; the new content is the odd-sector floor and its
$\sigma$-profile.
\item \textbf{A reference-relative diagnostic attribution instrument}
(Sec.~\ref{sec:instrument}): separate a map's measured excess into an empirical
parity proxy and a signed empirical residual, with finite-sample effects assessed
independently through an $N$-sweep. ``Which constraint binds'' is then empirical
and the instrument is reusable across parity-invariant fixed encoder choices that
expose comparable feature vectors.
\item \textbf{Finding: the depth-1 RY+CNOT+$Z$ family is parity-dominated}
(Sec.~\ref{sec:results}): the measured excess is dominated by the parity proxy
(small absolute signed residual, $|\hat\delta_{\sigma,N}|\le0.018$, Table~\ref{tab:ci}) on
\emph{two} structurally distinct priors (a 3-body coupled-phase texture and a
frustrated pairwise XY model), while the tested extensions---entanglement variants
and re-upload depth---do not close the gap. A short conditional-image check does not
contradict the prediction (Sec.~\ref{sec:image}).
\end{enumerate}
Our aim is characterization, not promotion: the deliverable is a reusable
instrument that pinpoints where a fixed map's expressivity ends and attributes the
residual deficit to a named cause.

\section{The CoupledPhaseTexture benchmark}
\label{sec:bench}
\paragraph{Data.} A configuration is $n{=}8$ phases $\theta\in\mathbb{T}^8$ drawn
from a coupled Gibbs law $p(\theta)\propto\exp\!\big(K\sum_{(i,j,k)\in\mathcal T}
\cos(\theta_i+\theta_j-2\theta_k)\big)$ on a random triple set $\mathcal T$
(a genuinely multi-body coupling, so local marginals are insufficient). The
prior is invariant under $\theta\mapsto-\theta$.

\paragraph{Forward process.} Wrapped-Gaussian angle noise
$\theta_\sigma=\theta_0+\sigma\varepsilon \bmod 2\pi$, $\varepsilon\sim\mathcal N(0,I)$
---the heat kernel on the torus, attenuating Fourier mode $k$ by
$e^{-\sigma^2k^2/2}$.

\paragraph{Task and reference.} Denoise: predict $Y=(\cos\theta_0,\sin\theta_0)
\in\mathbb{R}^{16}$ from $\theta_\sigma$ (the Bayes predictor is the denoiser
$m_\sigma=\mathbb E[Y\mid\theta_\sigma]$, not the noisy-density score). The exact
Bayes risk $R^\star_\sigma=\mathbb E\,\mathrm{Var}(Y\mid\theta_\sigma)$ is unknown;
we use a \emph{numerical reference} $R_{\mathrm{ref},\sigma}$, the risk of a
degree-3 both-parity trigonometric bank (CV-selected ridge, $N{=}8000$). We select
this bank on its performance at representative noise levels---degrees $4$--$5$ raise
the risk through estimation variance (Appendix~\ref{app:conv})---and retain it
across the evaluation grid for protocol consistency; it is \emph{not} necessarily
the pointwise risk minimizer at every noise level or prior (at large $\sigma$ the
degree-2 bank can be lower). It satisfies
$R_{\mathrm{ref},\sigma}\ge R^\star_\sigma$ (the exact gap to Bayes is unknown) and
serves as a common finite-sample reference for all compared maps, fit
\emph{independently} of any of them. All excess risks are reported relative to
$R_{\mathrm{ref},\sigma}$; readout is ridge (with held-out $\lambda$ selection),
removing optimization and regularization confounds.

\paragraph{Baseline suite (matched).} Fixed quantum map
($\expct{Z_i}$, $+\expct{Z_iZ_j}$, $+\expct{Z_iZ_jZ_k}$; entangler and depth
varied); a selected finite-sample even (cosine-only) reference; a
sin-carrying full bank; and classical even/full banks with feature counts noted
alongside the quantum maps so parity is not confounded with capacity.
Seeds and configuration files are specified in Appendix~\ref{app:repro}, and the code
will be released publicly.

\section{The parity error floor}
\label{sec:theory}
On $\mathbb{T}^n$ with the uniform measure, ``even/odd'' is parity under
$a\mapsto-a$ componentwise; the cosine (even) and sine (odd) sectors are
$L^2$-orthogonal. All excess-risk projections below are taken in $L^2$ over the
marginal law of the noisy phase $\theta_\sigma$; because both the prior and the
wrapped-Gaussian kernel are invariant under $\theta\mapsto-\theta$, this marginal
is itself parity-symmetric, so the even and odd subspaces remain orthogonal under
it (not only under the uniform measure).

\begin{lemma}[Even-sector confinement, single encoding layer]\label{lem:parity}
Let $q(a)=\big(\expct{Z_S}\big)_S$ with
$\expct{Z_S}(a)=\langle0|\,U^\dagger \textstyle\prod_{i\in S}Z_i\,U\,|0\rangle$,
$U=U_{\mathrm{ent}}\bigotimes_i \mathrm{RY}(a_i)$, and $U_{\mathrm{ent}}$ any
circuit of CNOTs. Then each $\expct{Z_S}(a)=\prod_{i\in S'}\cos a_i$ for some
$S'$, hence $q$ is even and $\mathrm{span}\,q\perp$ the sine sector.
\end{lemma}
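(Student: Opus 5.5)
Work in the Heisenberg picture. Write $\expct{Z_S}(a)=\langle\psi(a)|\,U_{\mathrm{ent}}^\dagger Z_S U_{\mathrm{ent}}\,|\psi(a)\rangle$ with $Z_S=\prod_{i\in S}Z_i$ and $|\psi(a)\rangle=\bigotimes_i \mathrm{RY}(a_i)|0\rangle$. This reduces the statement to two facts.

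The first fact is that conjugation by a CNOT circuit maps $Z$-strings to $Z$-strings with no phase. For a single CNOT with control $c$ and target $t$ we have $Z_c\mapsto Z_c$ and $Z_t\mapsto Z_cZ_t$. Since conjugation is multiplicative and $Z$'s commute, every $Z$-string goes to another $Z$-string with sign $+1$. Composing gate by gate gives $U_{\mathrm{ent}}^\dagger Z_S U_{\mathrm{ent}}=Z_{S'}$, where $S'$ is obtained from $S$ by the GF(2)-linear map $\mathbf 1_S\mapsto M^{\top}\mathbf 1_S$. Here $M$ is the invertible binary matrix implementing $U_{\mathrm{ent}}$ on basis states. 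I would state this as a small induction on the number of CNOTs. Checking the direction of conjugation ($U^\dagger Z U$ versus $UZU^\dagger$) and the absence of any sign is the only place care is needed. Sign-freeness holds because CNOT is a permutation matrix with real $0/1$ entries that maps the $Z$-eigenbasis to itself. Equivalently, $Z_S$ is diagonal with entries $(-1)^{\mathbf 1_S\cdot x}$, and permuting basis states by $x\mapsto Mx$ yields another character $(-1)^{\mathbf 1_{S'}\cdot x}$. I expect this step to be the main (mild) obstacle; the rest is a direct computation.

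The second fact is that expectations of $Z$-strings factorize on the product state. Since $\mathrm{RY}(a_i)|0\rangle=\cos(a_i/2)|0\rangle+\sin(a_i/2)|1\rangle$, we get $\langle Z_i\rangle=\cos^2(a_i/2)-\sin^2(a_i/2)=\cos a_i$ on each qubit. Tensor factorization then gives $\expct{Z_{S'}}=\prod_{i\in S'}\cos a_i$, and the empty set gives the constant $1$.

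For the conclusions, each component $\prod_{i\in S'}\cos a_i$ is invariant under $a\mapsto -a$, so $q$ is even. Any finite linear combination of even functions is even, so $\mathrm{span}\,q$ lies in the even subspace. This subspace is $L^2$-orthogonal to the sine (odd) sector under any parity-symmetric measure, since $\int fg=0$ for $f$ even and $g$ odd. That covers both the uniform measure and the $\theta_\sigma$-marginal, whose symmetry was noted just before the lemma. For the reader, I would also remark that expanding the product of cosines in exponentials shows the component lies in the cosine span of Fourier modes $k\in\{-1,0,1\}^n$ supported on $S'$.
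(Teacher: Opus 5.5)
Your proposal is correct and follows the paper's proof essentially step for step: you push $Z_S$ through the CNOT layer with $Z_c\mapsto Z_c$, $Z_t\mapsto Z_cZ_t$ to get $Z_{S'}$, factorize on $\bigotimes_i\mathrm{RY}(a_i)|0\rangle$ to get $\prod_{i\in S'}\cos a_i$, and conclude evenness and orthogonality by parity symmetry. One small caution: being a real $0/1$ permutation of the $Z$-eigenbasis does not by itself give sign-freeness (a bare $X$ gate is such a matrix yet sends $Z\mapsto -Z$); what matters is that the permutation is GF(2)-linear, $x\mapsto Mx$ with no offset, which your character computation $(-1)^{\mathbf 1_S\cdot Mx}=(-1)^{(M^{\top}\mathbf 1_S)\cdot x}$ and your single-gate induction both capture correctly.
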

\begin{proof}
$\mathrm{RY}(a_i)|0\rangle$ is real with $\expct{Z}=\cos a_i$, and CNOT is a real
permutation, so $U|0\rangle$ is real. Conjugating an all-$Z$ string by a CNOT
gives $Z_c\!\mapsto\!Z_c,\ Z_t\!\mapsto\!Z_cZ_t$: still all-$Z$, never producing
$X$ or $Y$. Thus $U^\dagger Z_S U=Z_{S'}$ and
$\expct{Z_S}(a)=\prod_{i\in S'}\cos a_i$, which is even; evenness survives linear
combination.
\end{proof}

\begin{theorem}[Containment-free parity floor]\label{thm:floor}
Let $Q$ be any even-sector feature class and $m_\sigma(\cdot)=\mathbb E[Y\mid
\theta_\sigma=\cdot]$ the Bayes denoiser. Since the prior is $\theta\mapsto-\theta$
symmetric and the noise is symmetric, the $\cos\theta_0$-components of $m_\sigma$
are even and the $\sin\theta_0$-components are odd. Then the excess risk splits \emph{exactly} into a parity term and a
within-sector approximation term,
\[
R_\sigma(Q)-R^\star_\sigma
=\underbrace{\big\|\Piodd m_\sigma\big\|^2_{L^2(\mu_\sigma)}}_{\text{parity floor}}
+\underbrace{\inf_{q\in Q}\big\|\Pieven m_\sigma-q\big\|^2_{L^2(\mu_\sigma)}}_{\text{within-sector residual}},
\]
and in particular
\[
R_\sigma(Q)-R^\star_\sigma \;\ge\; \big\|\Piodd m_\sigma\big\|^2
\;=\;\sum_{i}\mathbb E_{\theta_\sigma}\!\big[(\,\mathbb E[\sin\theta_0^i\mid\theta_\sigma]\,)^2\big],
\]
with equality exactly when $\inf_{q\in Q}\|\Pieven m_\sigma-q\|^2=0$, i.e.\ when the even
component of the Bayes target is approximable arbitrarily well within $Q$ (for closed $Q$,
when $\Pieven m_\sigma\in Q$). The bound needs \emph{no}
containment of $Q$ in any particular bank, and no linearity or closedness of $Q$;
only $Q\subseteq\{\text{even}\}$.
\end{theorem}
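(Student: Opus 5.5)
The argument has three parts: the parity structure of $m_\sigma$, the Pythagorean identity for squared loss, and an orthogonal splitting of $L^2(\mu_\sigma)$ into even and odd sectors.

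\textbf{Parity of the Bayes denoiser.} Write $R_\sigma(q)=\mathbb E\|Y-q(\theta_\sigma)\|^2$, with $R_\sigma(Q)=\inf_{q\in Q}R_\sigma(q)$ and $R^\star_\sigma=R_\sigma(m_\sigma)$. The joint law of $(\theta_0,\theta_\sigma)$ has density $p(\theta_0)\,k_\sigma(\theta_\sigma-\theta_0)$. Here $p$ is even, and the wrapped-Gaussian kernel $k_\sigma$ is even. Hence the joint law is invariant under $(\theta_0,\theta_\sigma)\mapsto(-\theta_0,-\theta_\sigma)$. Changing variables in the conditional expectation gives
\[
\mathbb E[\cos\theta_0^i\mid\theta_\sigma=-a]=\mathbb E[\cos\theta_0^i\mid\theta_\sigma=a],
\qquad
\mathbb E[\sin\theta_0^i\mid\theta_\sigma=-a]=-\mathbb E[\sin\theta_0^i\mid\theta_\sigma=a].
\]
So $\Pieven m_\sigma$ is the cosine block (padded with zeros in the sine coordinates), and $\Piodd m_\sigma$ is the sine block. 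The same symmetry makes $\mu_\sigma$ parity-invariant. Therefore the maps $\Pi_{\pm}f=\tfrac12(f(a)\pm f(-a))$ are orthogonal projections in $L^2(\mu_\sigma;\mathbb R^{16})$ with orthogonal ranges. This is the point to state carefully: the orthogonality is taken under $\mu_\sigma$, not the uniform measure.

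\textbf{Pythagorean step.} For any measurable $q$ in $L^2(\mu_\sigma)$, conditioning on $\theta_\sigma$ kills the cross term $\mathbb E\langle Y-m_\sigma,\,m_\sigma-q\rangle$. This gives
\[
R_\sigma(q)-R^\star_\sigma=\|m_\sigma-q\|^2_{L^2(\mu_\sigma)}.
\]

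\textbf{Even/odd split.} If $q\in Q$ is even, then $m_\sigma-q=(\Pieven m_\sigma-q)+\Piodd m_\sigma$. The first summand is even and the second is odd, so by orthogonality
\[
\|m_\sigma-q\|^2=\|\Piodd m_\sigma\|^2+\|\Pieven m_\sigma-q\|^2.
\]
Taking the infimum over $q\in Q$ affects only the second term, since the first does not depend on $q$. This yields the exact decomposition. The argument uses no linearity, closedness, or containment of $Q$: every step is pointwise in $q$ followed by an infimum.

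\textbf{Consequences.} The lower bound follows by dropping the nonnegative residual. Equality holds iff the infimum is $0$. For closed $Q$ this means $\Pieven m_\sigma\in Q$, because a minimizing sequence converges to $\Pieven m_\sigma$ in norm. The explicit formula $\sum_i\mathbb E_{\theta_\sigma}\bigl[(\mathbb E[\sin\theta_0^i\mid\theta_\sigma])^2\bigr]$ is just the squared norm of the sine block. Integrability is immediate because $Y$ is bounded, and Lemma~\ref{lem:parity} certifies that the quantum feature classes satisfy the evenness hypothesis.

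One interpretive caveat concerns the hypothesis "$Q$ even-sector". If $Q$ denotes predictors built from even features via ridge, they are linear combinations of even functions and hence even. If it allowed nonlinear readouts of even features, evenness would still hold, since any function of even features is even. This covers every case the paper considers.
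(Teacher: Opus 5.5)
Your proof is correct and follows the paper's argument. Parity invariance of the joint law makes $\mu_\sigma$ symmetric, the cosine block of $m_\sigma$ even and the sine block odd; then the pointwise-in-$q$ orthogonal split $\|m_\sigma-q\|^2=\|\Piodd m_\sigma\|^2+\|\Pieven m_\sigma-q\|^2$ is followed by the infimum over $Q$. You also state explicitly two steps the paper leaves implicit, namely the squared-loss identity $R_\sigma(q)-R^\star_\sigma=\|m_\sigma-q\|^2_{L^2(\mu_\sigma)}$ and the minimizing-sequence argument for the closed-$Q$ equality case.
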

\begin{proof}
Write $m_\sigma=\Pieven m_\sigma+\Piodd m_\sigma$. For any $q\in Q\subseteq\{\text{even}\}$
the odd component is $L^2(\mu_\sigma)$-orthogonal to $m_\sigma-q$'s even part, so
$\|m_\sigma-q\|^2=\|\Piodd m_\sigma\|^2+\|\Pieven m_\sigma-q\|^2$. Taking the infimum over
$q\in Q$ leaves the first term untouched and gives the stated decomposition; the inequality
follows since the second term is non-negative. Note that $Q$ need not equal the whole even
sector --- the parity term is the same for every even $Q$, and only the second term depends
on which even features are reachable. Nor is $Q$ assumed closed or linear: the
identity holds pointwise in $q$ and the infimum is taken afterwards.
\end{proof}

\paragraph{Scope (honest).} (1) Lemma~\ref{lem:parity} is for a \emph{single}
encoding layer; data re-uploading enlarges the reachable set to both-parity
trigonometric polynomials \citep{schuld2021fourier}, and we verify numerically
that depth${\ge}2$ breaks the exact evenness ($\max_a\|q(a)-q(-a)\|$ jumps from
$0$ to $\mathcal O(1)$). Theorem~\ref{thm:floor} then applies to the depth-1 map;
Sec.~\ref{sec:results} shows depth${\ge}2$ nonetheless fails to close the floor.
(2) The floor is \emph{not} a $\sigma$-coupled separation: $U$ is $\sigma$-invariant,
so all $\sigma$-dependence lives in the target $m_\sigma$; the $\sigma$-profile of
$\|\Piodd m_\sigma\|^2$ is a property of the diffused denoiser, cited to the classical
noise-scale literature, not of the quantum map. (3) We do \emph{not} claim
$Q\subseteq$ a degree-2 bank---that is false (CNOT readouts reach higher-degree
even monomials); the theorem uses only even-sector membership. In short, the floor
theorem is a statement about the depth-1 $\mathrm{RY}$+CNOT+$Z$ family; re-uploaded
encoders and non-$Z$ readouts are treated as separate empirical variants, not
covered by it.

\section{The reference-relative diagnostic instrument}
\label{sec:instrument}
Given a fixed map, split its measured excess (via large-$N$ ridge, $\approx$
population risk) into
\[
\underbrace{R_\sigma^N(Q)-R_{\mathrm{ref},\sigma}}_{\text{measured excess}}
=\underbrace{\big(R_\sigma^{\text{even-ref}}-R_{\mathrm{ref},\sigma}\big)}_{\text{parity proxy }\hat P_\sigma}
+\underbrace{\big(R_\sigma^{N}(Q)-R_\sigma^{\text{even-ref}}\big)}_{\text{signed empirical residual }\hat\delta_{\sigma,N}},
\]
where $R_{\mathrm{ref}}$ is the selected numerical reference and
$R_\sigma^{\text{even-ref}}$ the selected finite-sample even reference, both fit at
$N{=}8000$ and independent of $Q$. The empirical \emph{parity proxy}
$\hat P_\sigma=R_\sigma^{\text{even-ref}}-R_{\mathrm{ref},\sigma}$ estimates the
theoretical parity floor $P_\sigma=\|\Piodd m_\sigma\|^2$ of Theorem~\ref{thm:floor}
up to finite-reference approximation and estimation errors. The second term is a \emph{signed empirical residual}
$\hat\delta_{\sigma,N}=R_\sigma^N(Q)-R_\sigma^{\text{even-ref}}$: it combines the
population reference-relative residual with the map's finite-sample effect, and may
be signed because the finite even reference need not contain the feature span of $Q$.
Its sign combines within-parity mismatch and finite-sample estimation effects and
does not by itself identify which dominates; persistence under the $N$-sweep
(Table~\ref{tab:e5}) supports a representational reading. It is the residual of
Tables~\ref{tab:head}--\ref{tab:ci}, negative at $\sigma{=}1.4$. Reading: measured
excess $\approx\hat P_\sigma\Rightarrow$ parity-dominated; a resolved
$\hat\delta_{\sigma,N}$ indicates a measurable difference between the two finite
feature classes, whose $N$-dependence distinguishes finite-sample effects from a
persistent representational mismatch.
``Which constraint binds'' is measured, and the instrument applies to
parity-invariant fixed encoder/readout/entangler families that expose comparable
feature vectors.

\section{Benchmark results}
\label{sec:results}
Averages over 8 random 3-body graphs (decomposition) and 6 graphs (design axes);
ridge with held-out $\lambda$ selection; $N{=}8000$ (decomposition) unless swept.

\paragraph{The depth-1 RY+CNOT+$Z$ map is parity-dominated (headline).}
Table~\ref{tab:head} and Fig.~\ref{fig:floor}: measured against the numerical
reference $R_{\mathrm{ref}}$, the depth-1 quantum excess is dominated by the parity
proxy $\hat P_\sigma$ at every $\sigma$. The signed empirical residual
$\hat\delta_{\sigma,N}$ is small in absolute value ($|\hat\delta_{\sigma,N}|\le0.018$;
per-graph paired $95\%$ bootstrap CIs in Table~\ref{tab:ci}) though statistically
resolved (its CI excludes zero); it is even slightly \emph{negative} at
$\sigma{=}1.4$, where the finite even reference does not contain the quantum even span. So the map is parity-\emph{dominated}: its excess is
almost, but not exactly, the odd sector it cannot represent. The parity proxy decreases with $\sigma$ as the heat kernel washes out
odd structure. That the dominant deficit is parity---not the map being quantum, nor
capacity---is shown directly in Section~\ref{sec:positive}: a purely \emph{classical}
even (cosine-only) bank is floored to the same level, while adding the sine sector
reaches the reference.

\begin{table}[t]\centering
\caption{Excess risk vs the numerical reference $R_{\mathrm{ref}}$ (degree-3 trig
bank, CV-selected ridge, $N{=}8000$; 8 graphs; references fit independently of the
compared map). Depth-1 quantum excess is dominated by the parity proxy
$\hat P_\sigma$ (selected even reference $-$ $R_{\mathrm{ref}}$) at every $\sigma$;
the signed empirical residual $\hat\delta_{\sigma,N}$ is small and is quantified with
paired bootstrap CIs in Table~\ref{tab:ci}.}
\label{tab:head}
\begin{tabular}{lccc}
\toprule
$\sigma$ & parity proxy $\hat P_\sigma$ & quantum excess & empirical residual $\hat\delta_{\sigma,N}$ \\
\midrule
0.20 & 0.250 & 0.263 & 0.013 \\
0.35 & 0.234 & 0.250 & 0.016 \\
0.50 & 0.212 & 0.230 & 0.018 \\
0.70 & 0.171 & 0.190 & 0.018 \\
1.00 & 0.101 & 0.109 & 0.008 \\
1.40 & 0.030 & 0.024 & $-0.006$ \\
\bottomrule
\end{tabular}
\end{table}

\begin{table}[t]\centering
\caption{Headline uncertainty (8 graphs). The signed empirical residual
$\hat\delta_{\sigma,N}=$ quantum excess $-$ parity proxy is reported relative to the
selected finite-sample even reference and may be \emph{signed} (it is negative at
$\sigma{=}1.4$), because that reference need not contain the quantum even feature
span. We give its mean, standard error, and a $5000$-resample bootstrap $95\%$ CI.
It is small in absolute value ($|\hat\delta_{\sigma,N}|\le0.018$) though its CI
excludes zero---so the map is parity-\emph{dominated}, not exactly parity-bound.}
\label{tab:ci}
\begin{tabular}{lccccc}
\toprule
$\sigma$ & parity proxy $\hat P_\sigma$ & quantum excess & residual $\hat\delta_{\sigma,N}$ & SE & bootstrap $95\%$ CI \\
\midrule
0.20 & 0.250 & 0.263 & 0.0133 & 0.0013 & $[+0.011,+0.016]$ \\
0.35 & 0.234 & 0.250 & 0.0160 & 0.0011 & $[+0.014,+0.018]$ \\
0.50 & 0.212 & 0.230 & 0.0183 & 0.0011 & $[+0.016,+0.020]$ \\
0.70 & 0.171 & 0.190 & 0.0181 & 0.0010 & $[+0.016,+0.020]$ \\
1.00 & 0.101 & 0.109 & 0.0079 & 0.0004 & $[+0.007,+0.009]$ \\
1.40 & 0.030 & 0.024 & $-0.0059$ & 0.0004 & $[-0.007,-0.005]$ \\
\bottomrule
\end{tabular}
\end{table}

\begin{figure}[t]\centering
\includegraphics[width=0.9\linewidth]{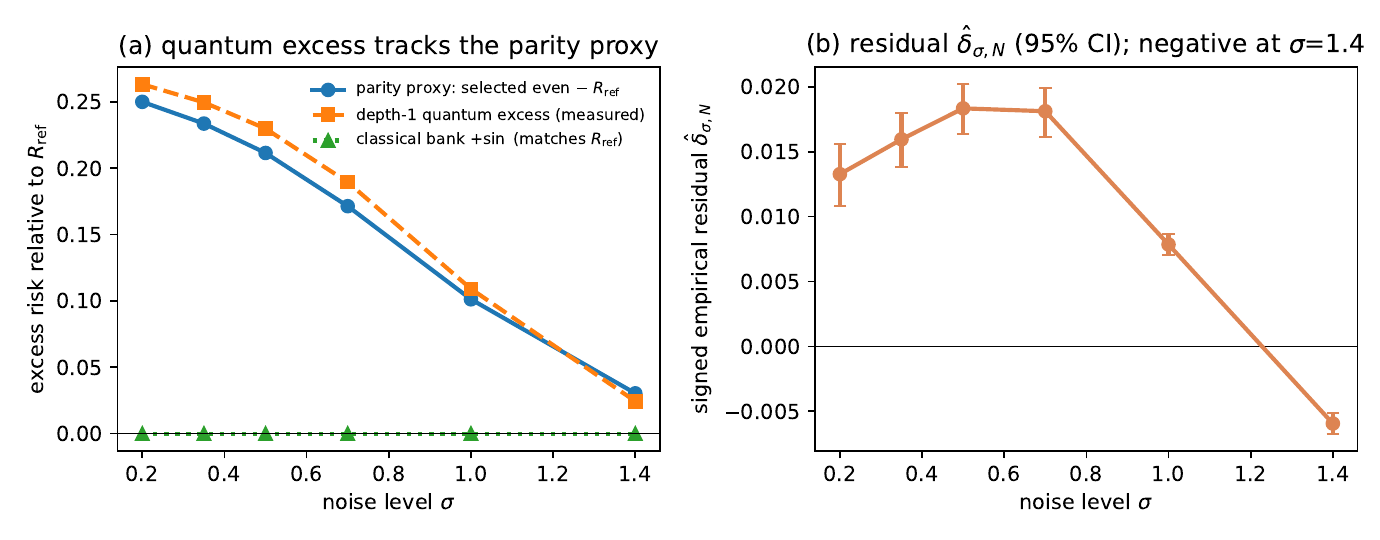}
\caption{(a) The measured depth-1 quantum excess closely tracks the parity proxy
(selected even reference $-$ $R_{\mathrm{ref}}$) $\approx\|\Piodd m_\sigma\|^2$ across
noise levels, while a classical bank that adds the sine sector matches
$R_{\mathrm{ref}}$. (b) The signed empirical residual $\hat\delta_{\sigma,N}$ is small
in absolute value ($\le0.018$); error bars are paired bootstrap $95\%$ CIs
(Table~\ref{tab:ci}). The negative value at $\sigma{=}1.4$ reflects that the finite
even reference need not contain the quantum even feature span.}
\label{fig:floor}
\end{figure}

\paragraph{The observed parity-proxy profile is consistent with heat-kernel attenuation.} Theorem~\ref{thm:floor}
makes the floor a function of the heat-kernel-attenuated odd spectrum: the forward
process damps Fourier mode $k$ by $e^{-\sigma^2k^2/2}$, so a floor dominated by the
leading odd harmonic should decay as $\|\Piodd m_\sigma\|^2\propto e^{-a\sigma^2}$
with $a=\mathcal O(1)$. Fitting the measured parity proxy to a two-parameter
$c\,e^{-a\sigma^2}$ over the tested $\sigma$ range gives $R^2{=}0.99$ on
CoupledPhaseTexture ($a{=}1.10$) and $R^2{=}0.997$ on the frustrated-XY prior
($a{=}0.65$; Table~\ref{tab:xy}); the fitted exponents are of the same order as the
mode-1 value $a{=}1$.
The fitted profile is thus \emph{consistent with} a leading odd harmonic dominating
over the tested range---a descriptive fit, not an exact closed-form identity for the
conditional denoiser---and this $\sigma$-dependence is diffusion-specific, with no
analogue in the static Fourier/expressivity view of a fixed encoding.

\paragraph{Sample axis.} Table~\ref{tab:e5}: apart from the very small-$N$ regime
(where finite-sample conditioning separates them), the depth-1 quantum map and its
even-sector classical equivalent track closely by $N\ge300$ and approach similar
reference-relative excess, both remaining separated from $R_{\mathrm{ref}}$ by a
parity-dominated component while the sin-carrying bank sits near $R_{\mathrm{ref}}$---so
the deficit is representational
(parity), not a
sample-complexity artifact.

\begin{table}[t]\centering
\caption{Sample-complexity ($\sigma{=}0.5$, 6 graphs; excess vs $R_{\mathrm{ref}}$). Quantum
and even-classical share sample complexity; both remain at similar
parity-dominated excess above the sine-carrying bank.}
\label{tab:e5}
\begin{tabular}{lccc}
\toprule
$N$ & quantum (depth-1) & even-classical & classical-full (sin) \\
\midrule
100  & 0.304 & 0.323 & 0.225 \\
300  & 0.264 & 0.260 & 0.085 \\
1000 & 0.243 & 0.229 & 0.031 \\
3000 & 0.235 & 0.218 & 0.020 \\
\bottomrule
\end{tabular}
\end{table}

\paragraph{Design axes.} Table~\ref{tab:e6}: higher-body readout captures more of
the even sector (order-1 $\to$ 2 $\to$ 3: $0.40\to0.23\to0.22$); the entangler
barely changes the reference-relative excess (\texttt{none}/\texttt{line}/\texttt{ring}
$\approx0.21$--$0.23$, all-to-all \emph{worse}, $0.34$); and re-upload
depth---which breaks the even confinement and so \emph{could} represent the odd
sector---does not close the floor (depth $1/2/3\!:\ 0.23/0.23/0.40$, non-monotone).
The non-monotone depth trend reflects representation misalignment, not capacity:
re-uploading pushes the fixed $36$-dim readout to high frequency, so train and test
excess rise together while conditioning \emph{improves}
(Appendix~\ref{app:design}); we therefore do not read re-uploading as a reliable
capacity gain, and entanglement is not what binds the map.

\begin{table}[t]\centering
\caption{Design-axis ablation ($\sigma{=}0.5$, $N{=}3000$, 6 graphs; excess vs
$R_{\mathrm{ref}}$, mean$\pm$std where shown; the entangler row reports means over 6 graphs).}
\label{tab:e6}
\begin{tabular}{lc@{\hskip 2em}lc}
\toprule
readout / entangler & excess & re-upload depth & excess \\
\midrule
order-1 $\expct{Z_i}$        & $0.397\pm0.005$ & depth 1 & $0.234\pm0.005$ \\
order-2 $+\expct{Z_iZ_j}$    & $0.234\pm0.005$ & depth 2 & $0.227\pm0.012$ \\
order-3 $+\expct{Z_iZ_jZ_k}$ & $0.219\pm0.003$ & depth 3 & $0.400\pm0.004$ \\
entangler none / line / ring / all-to-all  & \multicolumn{3}{l}{$0.215$ / $0.214$ / $0.234$ / $0.342$} \\
\bottomrule
\end{tabular}
\end{table}

\paragraph{Generalization to a second, structurally distinct prior.}
To check that the floor is a property of the diffused denoiser rather than of the
specific 3-body coupling, we re-run the \emph{entire} decomposition on a
\emph{2-body frustrated signed-XY} prior,
$p(\theta)\propto\exp\!\big(\beta\sum_{(i,j)\in E}J_{ij}\cos(\theta_i-\theta_j)\big)$
on a random 3-regular graph with $J_{ij}\in\{\pm1\}$ ($\beta{=}2.0$)---pairwise
rather than 3-body, frustrated rather than ferromagnetic, but still
$\theta\mapsto-\theta$ symmetric, so Theorem~\ref{thm:floor} applies. Table~\ref{tab:xy}
reproduces the CoupledPhaseTexture picture: the depth-1 quantum excess tracks the
parity proxy (mean ratio $1.00$, signed empirical residual $\approx0$), the classical
even bank is floored at the same level ($+0.175$ averaged over $\sigma$), and the
classical full bank matches the numerical reference ($-0.004$). Degrees $4$--$5$ do
not lower the reference here either (the 2-body prior carries little high-degree
structure), so degree-3 is again the selected reference. The proxy-tracking pattern
is thus not specific to one coupling, and the parity-floor theorem applies to both
symmetric priors.

\begin{table}[t]\centering
\caption{Second benchmark (2-body frustrated signed-XY, $\beta{=}2.0$, $N{=}8000$,
6 graphs; excess vs the numerical reference $R_{\mathrm{ref}}$). The decomposition reproduces
the CoupledPhaseTexture picture on a structurally different prior: quantum excess
$\approx$ parity proxy, classical-even floored, classical-full $\approx R_{\mathrm{ref}}$
(signed residuals within $\pm0.01$ read as zero, as in Table~\ref{tab:head}).}
\label{tab:xy}
\begin{tabular}{lccccc}
\toprule
$\sigma$ & parity proxy & quantum excess & empirical residual & classical-even & classical-full \\
\midrule
0.20 & 0.246 & 0.249 & 0.003 & 0.241 & $-0.001$ \\
0.35 & 0.230 & 0.233 & 0.002 & 0.225 & $-0.001$ \\
0.50 & 0.213 & 0.215 & 0.002 & 0.208 & $-0.001$ \\
0.70 & 0.186 & 0.187 & 0.001 & 0.182 & $-0.001$ \\
1.00 & 0.139 & 0.140 & 0.001 & 0.134 & $-0.004$ \\
1.40 & 0.069 & 0.067 & $-0.003$ & 0.061 & $-0.013$ \\
\bottomrule
\end{tabular}
\end{table}

\section{Image external-validity transfer}
\label{sec:image}
As a lightweight external-validity sanity check---not a full image-generation
benchmark---we inject the same fixed
$0$-trainable-parameter map ($\expct{Z_i}{+}\expct{Z_iZ_j}$, $36$ features;
distinct from the $\expct{Z_i}$-only variational cores of \citealp{paper1}) into
the conditioning path of a class-conditional diffusion model, and report
classifier \emph{controllability} (class-fidelity)---never unconditional FID.
On a $3$-class set of structural (civil) images generated
\emph{class-conditionally} (Table~\ref{tab:image}),
controllability saturates at $\approx1.0$ for the quantum fixed map, a
parameter-matched classical fixed map, and the no-map baseline alike, and the
best class-fidelity/FID is classical: no quantum edge, as the parity analysis
predicts. The map is inserted at the same bottleneck-conditioning hook used in a
prior fair-comparison study \citep{paper1}, purely as a transfer check; that
study's scaffold, tables, and parameter accounting are not re-reported here.

\begin{table}[t]\centering
\caption{Image external validity (conditional DDPM, $3$-class class-conditional
generation of structural civil images, classifier acc $1.00$, $500$/class). The
fixed quantum map shows no controllability or FID edge over a parameter-matched
classical fixed map; both are near-saturated. Headline metric is controllability,
not FID.}
\label{tab:image}
\begin{tabular}{lcc}
\toprule
conditioning map & controllability $\uparrow$ & FID $\downarrow$ \\
\midrule
none                       & 0.999 & 14.20 \\
classical SE core (trainable) & 1.000 & 12.91 \\
quantum fixed map $\expct{Z_i}{+}\expct{Z_iZ_j}$ & 1.000 & 13.51 \\
classical fixed map (param-matched) & 0.999 & \textbf{12.65} \\
\bottomrule
\end{tabular}
\end{table}

\section{Positioning: dequantization and the simulability trap}
\label{sec:boundary}
Even-sector confinement is a diffusion-specific instance of dequantization: the
single-layer angle map realizes a truncated-Fourier/kernel class
\citep{schuld2021fourier}, and the depth/non-Clifford taxonomy of
\citep{taxonomy2025} places it in the classically-tractable tier. The delta over
that line of work is explicit. Fourier/expressivity results characterize \emph{which
functions} a fixed encoding can represent and \emph{whether} it is classically
simulable---a model-side, task-agnostic statement. We instead give, for the
diffusion denoising task, (i) a \emph{task-resolved} lower bound on excess risk
that is exactly the odd-sector mass of the diffused Bayes denoiser, (ii) an
\emph{empirical $\sigma$-profile} consistent with leading-mode heat-kernel
attenuation ($c\,e^{-a\sigma^2}$, $R^2\!\ge\!0.99$ on two priors; descriptive, not
an exact identity), and (iii) a reference-relative attribution into a parity proxy
and a signed empirical residual, with finite-sample behavior assessed separately,
that turns a null into a diagnosis. None of these follow
from the static representability statement: an encoding can be ``expressive enough''
in the Fourier sense yet still pay this parity floor because the diffused target's
odd sector is orthogonal to the whole even class. A prominent regime not addressed by classical feature-map dequantization is
learning from quantum data \citep{huang2022experiments}. Separations in that setting
concern nonclassical data-access costs that are absent from classically tractable
simulated instances \citep{tensofqubits2026}: in any instance we can \emph{simulate},
privileged access to the known state-preparation circuit can eliminate the intended
measurement-access separation. This motivates
distinguishing simulability from genuinely nonclassical data access, and treating
simulated quantum-data experiments cautiously; we do not establish that either
nonclassical data access or an unmatched feature class is by itself sufficient for
advantage.

\paragraph{Related work.} Dequantization of kernel/feature-map QML began with
sampling-based low-rank methods and continues through random-Fourier-feature
surrogation of angle-encoded models \citep{schuld2021fourier,rffdequant2025}; our
floor is the diffusion-denoiser specialization. Fair-comparison and benchmarking
critiques argue that reported QML advantages often vanish against capacity-matched
classical baselines \citep{bowles2024benchmarking,schuld2022advantage,matchedspectral2026};
we contribute a closed-form testbed and a decomposition that turn such negatives
into calibrated measurements. Quantum and hybrid generative/diffusion models
\citep{zhang2024quddpm,parigi2024qndgdm,qgen2026} motivate the fixed-map object we
analyze. Learning from quantum data \citep{huang2022experiments,tensofqubits2026}
is the regime our boundary argument leaves open. Finally, classical diffusion and
denoising theory \citep{ho2020ddpm,song2021sde} supplies the objects we build
on---the conditional denoiser, heat-kernel smoothing on the (here toroidal) domain
\citep{debortoli2022riemannian}, and the resulting spectral attenuation; our
contribution is not a new diffusion identity but a task-resolved parity lower bound
and a diagnostic benchmark for fixed quantum feature maps in this setting.

\section{What a positive would likely require in this setting}
\label{sec:positive}
Closing the floor requires odd content where the denoiser needs it. Measuring every
map against the \emph{independent} numerical reference $R_{\mathrm{ref}}$ and noting feature
counts (Table~\ref{tab:sigma}) separates parity from both quantumness and capacity.
Two controls are decisive. First, a purely \emph{classical} even (cosine-only)
bank is floored to essentially the same level as the quantum even map
($+0.165$ vs.\ $+0.178$ averaged over $\sigma$)---so the floor is a \emph{parity}
property, not a quantum defect. Second, the classical \emph{full} bank (the same
features plus the sine sector) matches the numerical reference ($+0.004$), confirming
that the missing ingredient is odd content, not capacity. Against this, giving the
quantum map odd ($\expct{X_i}$) readouts (a $100$-feature map) retains a mean excess
of $+0.144$ over the six-noise grid, and a $\sigma$-coupled variant
($\mathrm{RZ}(\phi(\sigma){=}\sigma)$ before $\expct{Y}$ readouts, $72$ features)
changes it only marginally ($+0.176$ mean, a $\approx0.003$ reduction of the
$q_{\text{even}}$ excess; Table~\ref{tab:sigma}): this is consistent with the
resulting entangled readout basis being poorly aligned with the denoiser's relevant
low-frequency statistics. Among the in-simulation constructions tested here, neither odd-augmented readouts
nor noise-schedule-coupled variants match the sine-carrying classical bank (which
matches $R_{\mathrm{ref}}$). This motivates future study of beyond-simulation quantum
data \citep{huang2022experiments,tensofqubits2026} and feature classes without
efficient matched classical surrogates.

\begin{table}[t]\centering
\caption{Excess vs the \emph{independent} numerical reference $R_{\mathrm{ref}}$ (6 graphs);
feature counts in the header separate parity from capacity. A classical even bank
($64$f, no sin) is floored like the quantum even map ($36$f); the classical full
bank ($128$f, $+\sin$) matches $R_{\mathrm{ref}}$; the odd-augmented quantum map ($100$f)
partially reduces the excess, whereas the $\sigma$-coupled RZ variant ($72$f) changes
it only marginally.}
\label{tab:sigma}
\begin{tabular}{lccccc}
\toprule
$\sigma$ & $q_{\text{even}}$ (36f) & $q_{\text{full}}$ (100f, $X{+}Z$) & $q_{\sigma}$ (72f) & $c_{\text{even}}$ (64f) & $c_{\text{full}}$ (128f, $+\sin$) \\
\midrule
0.20 & 0.263 & 0.204 & 0.257 & 0.244 & \textbf{0.001} \\
0.35 & 0.250 & 0.197 & 0.244 & 0.232 & \textbf{0.008} \\
0.50 & 0.230 & 0.186 & 0.227 & 0.214 & \textbf{0.016} \\
0.70 & 0.189 & 0.158 & 0.187 & 0.176 & \textbf{0.017} \\
1.00 & 0.110 & 0.095 & 0.110 & 0.102 & \textbf{0.000} \\
1.40 & 0.028 & 0.023 & 0.028 & 0.024 & \textbf{$-$0.018} \\
\bottomrule
\end{tabular}
\end{table}

\section{Conclusion}
Within simulable scale we contribute a closed-form benchmark and a
decomposition instrument that turn ``no quantum advantage'' from an unexplained
null into a calibrated, mechanistic reading: for the depth-1 RY+CNOT+$Z$ family the
dominant limitation is parity confinement, with a small reference-relative residual
under the tested finite-sample protocol; the tested extensions---odd-augmented
readouts, entanglement variants, re-upload depth, and $\sigma$-coupled
encoders---do not close the gap to a matched classical sine-carrying bank. The
benchmark can evaluate a broad class of fixed feature maps, while the parity
attribution developed here applies to parity-invariant feature families; together
they delimit the classically simulable fixed-map regime studied here and motivate,
rather than establish, future investigation of nonclassical data access and feature
classes not matched by a simple classical spectral surrogate.

\paragraph{Author Contribution.}
Jaeuk Kim conceived the study, developed the theoretical analysis, designed
and conducted the experiments, analyzed the results, and prepared the
manuscript. Generative AI tools were used to assist with reviewing manuscript
revisions, with rewording and condensing passages of the manuscript from the
author's revision notes, and with \LaTeX{} typesetting.
All scientific content, mathematical arguments, experimental
design, results, and conclusions originate with the author, who independently
reviewed and verified every statement in the manuscript.

\section*{Acknowledgements}
This research was supported by Seoul R\&BD Program (QR250005, ``Building a Quantum
Computing--AI Integrated Development Platform and Developing Quantum--AI Algorithm'')
through the Seoul Business Agency (SBA) funded by Seoul Metropolitan Government.

\bibliographystyle{tmlr}
\bibliography{main}

\appendix
\section{Proof details}
\label{app:proofs}
\paragraph{Lemma~\ref{lem:parity} (even-sector confinement), expanded.}
Write $U=U_{\mathrm{ent}}\bigotimes_i \mathrm{RY}(a_i)$ with $U_{\mathrm{ent}}$ a
product of CNOTs. $\mathrm{RY}(a_i)=\exp(-i a_i Y/2)$ is a real matrix, and
$\mathrm{RY}(a_i)|0\rangle=\cos(a_i/2)|0\rangle+\sin(a_i/2)|1\rangle$ is a real
vector; CNOT is a real permutation matrix, so $|\psi(a)\rangle:=U|0\rangle$ is
real-valued for every $a$. For a $Z$-string $Z_S=\prod_{i\in S}Z_i$, conjugation by
a single CNOT with control $c$, target $t$ acts as $Z_c\!\mapsto\!Z_c$,
$Z_t\!\mapsto\!Z_cZ_t$, $Z_i\!\mapsto\!Z_i\ (i\neq c,t)$: the image is again a
$Z$-string, never producing $X$ or $Y$. Composing over all CNOTs,
$U_{\mathrm{ent}}^\dagger Z_S U_{\mathrm{ent}}=Z_{S'}$ for some set $S'$. Hence
$\langle Z_S\rangle(a)=\langle0|\bigotimes_i\mathrm{RY}(a_i)^\dagger\,Z_{S'}\,
\bigotimes_i\mathrm{RY}(a_i)|0\rangle=\prod_{i\in S'}\langle0|\mathrm{RY}(a_i)^\dagger
Z\,\mathrm{RY}(a_i)|0\rangle=\prod_{i\in S'}\cos a_i$, a product of cosines and thus
an even function of $a$. Evenness is preserved by linear (ridge) readout, so
$\mathrm{span}\,q\subseteq\{\text{even}\}$ and is $L^2$-orthogonal to every
sine-sector function. $\square$

\paragraph{Theorem~\ref{thm:floor} (containment-free parity floor), expanded.}
Fix $\sigma$ and work in $L^2(\mu_\sigma)$, $\mu_\sigma$ the marginal law of
$\theta_\sigma$. The prior $p(\theta)$ is invariant under $\theta\mapsto-\theta$
(the coupling $\cos(\theta_i+\theta_j-2\theta_k)$ is even), and the wrapped-Gaussian
kernel $\kappa_\sigma(\theta_\sigma\mid\theta_0)$ satisfies
$\kappa_\sigma(-\theta_\sigma\mid-\theta_0)=\kappa_\sigma(\theta_\sigma\mid\theta_0)$;
hence $\mu_\sigma$ is parity-symmetric and the even/odd subspaces of $L^2(\mu_\sigma)$
are orthogonal. Decompose the Bayes denoiser
$m_\sigma=\Pi_{\mathrm{even}}m_\sigma+\Pi_{\mathrm{odd}}m_\sigma$. Its
$\cos\theta_0$-coordinates, $a\mapsto\mathbb E[\cos\theta_0^i\mid\theta_\sigma=a]$,
are even in $a$ (numerator and denominator of the conditional expectation are both
even); its $\sin\theta_0$-coordinates are odd. For any $q\in Q\subseteq\{\mathrm{even}\}$,
parity orthogonality gives
\[
\|m_\sigma-q\|_{L^2(\mu_\sigma)}^2
=\|\Pi_{\mathrm{odd}}m_\sigma\|_{L^2(\mu_\sigma)}^2
+\|\Pi_{\mathrm{even}}m_\sigma-q\|_{L^2(\mu_\sigma)}^2,
\]
since $m_\sigma-q=\Pi_{\mathrm{odd}}m_\sigma+(\Pi_{\mathrm{even}}m_\sigma-q)$ splits a
vector into its odd and even parts. Taking the infimum over $q\in Q$ leaves the first
term untouched and yields
\[
R_\sigma(Q)-R^\star_\sigma
=\|\Pi_{\mathrm{odd}}m_\sigma\|_{L^2(\mu_\sigma)}^2
+\inf_{q\in Q}\|\Pi_{\mathrm{even}}m_\sigma-q\|_{L^2(\mu_\sigma)}^2
\;\ge\;\|\Pi_{\mathrm{odd}}m_\sigma\|_{L^2(\mu_\sigma)}^2 .
\]
Equality holds exactly when the infimum vanishes, i.e.\ when the even component of the
Bayes target can be approximated arbitrarily well within $Q$; if $Q$ is closed, this is
equivalent to $\Pi_{\mathrm{even}}m_\sigma\in Q$. The argument never requires $Q$ to be a
closed linear subspace, nor an orthogonal projection onto $Q$ to exist, and no containment
of $Q$ in any specific bank is used --- only $Q\subseteq\{\text{even}\}$. $\square$

\section{The CoupledPhaseTexture benchmark: generation details}
\label{app:data}
Table~\ref{tab:appdata} lists every generation parameter. The prior is the 3-body
Gibbs law $p(\theta)\propto\exp\!\big(K\sum_{(i,j,k)\in\mathcal T}
\cos(\theta_i+\theta_j-2\theta_k)\big)$ on $\mathbb T^8$, sampled by single-site
Metropolis (Gaussian proposal, acceptance $u<\exp(-\beta\,\Delta E)$ with
$E=-K\sum\cos(\cdot)$). A fresh triple set $\mathcal T$ and a fresh MCMC chain are
drawn per graph; train and test phases use independent chains under the same graph.
The forward process is the torus heat kernel realized as wrapped Gaussian
$\theta_\sigma=(\theta_0+\sigma\varepsilon)\bmod2\pi$,
$\varepsilon\sim\mathcal N(0,I_8)$.

\paragraph{Second prior (Table~\ref{tab:xy}).} The generalization benchmark uses a
2-body frustrated signed-XY prior $p(\theta)\propto\exp(\beta\sum_{(i,j)\in E}
J_{ij}\cos(\theta_i-\theta_j))$ on a random 3-regular graph ($n{=}8$),
$J_{ij}\in\{\pm1\}$ with $\approx$half negative (genuine frustration), $\beta{=}2.0$,
sampled by the same single-site Metropolis and diffused by the same wrapped
Gaussian; the decomposition, references, and readout are identical to the primary
benchmark.

\paragraph{Mixing.} We use $N$ independent chains (one per sample), each $200$
single-site Metropolis sweeps from a uniform start; acceptance is $\approx0.53$.
The single-angle first circular moments $\mathbb E[e^{i\theta_i}]$ vanish---so both
the $\cos$ \emph{and} $\sin$ marginals are $\approx0$---by the \emph{global
phase-shift invariance} of the Gibbs law (the coupling
$\cos(\theta_i+\theta_j-2\theta_k)$ is invariant under $\theta\mapsto\theta+c$), not
by inversion symmetry, which does not constrain the even $\cos$ marginal. The
quantities the benchmark actually reports are stable to chain length: as sweeps
increase $200\to400\to800$ ($\sigma{=}0.5$, 3 graphs) the parity proxy moves
$0.2105\to0.2137$ and the depth-1 quantum excess $0.2261\to0.2240$ (both $<0.004$),
and single-angle marginals drift $<0.02$. We therefore treat $200$ sweeps as
adequate for the relative comparisons; the released implementation allows
longer chains and standard diagnostics.

\begin{table}[h]\centering
\caption{CoupledPhaseTexture generation and evaluation parameters.}
\label{tab:appdata}
\begin{tabular}{ll}
\toprule
parameter & value \\
\midrule
phases $n$ (torus $\mathbb T^n$)        & $8$ \\
coupling $K$                            & $1.5$ \\
inverse temperature $\beta$             & $1.0$ \\
number of triples $|\mathcal T|$        & $12$ (distinct nodes, sampled per graph) \\
MCMC                                    & single-site Metropolis, Gaussian proposal std $0.7$ \\
MCMC sweeps                             & $200$ \\
initialization                          & $\theta\sim\mathrm{Uniform}[0,2\pi)^8$ \\
forward kernel                          & $\theta_\sigma=(\theta_0+\sigma\varepsilon)\bmod2\pi$ \\
noise grid $\sigma$                     & $0.20,\,0.35,\,0.50,\,0.70,\,1.00,\,1.40$ \\
target $Y$                              & $(\cos\theta_0,\sin\theta_0)\in\mathbb R^{16}$ \\
train / test $N$                        & $8000$ / $4000$ \\
graphs (decomposition, Table~\ref{tab:head})   & $8$ (seeds $5000$--$5007$) \\
graphs (sample complexity, Table~\ref{tab:e5}) & $6$ (seeds $2000$--$2005$) \\
graphs (design axes, Table~\ref{tab:e6})       & $6$ (seeds $3000$--$3005$) \\
graphs ($\sigma$-coupled, Table~\ref{tab:sigma})   & $6$ (seeds $4000$--$4005$) \\
graphs (second prior, Table~\ref{tab:xy})      & $6$ (seeds $7000$--$7005$) \\
graphs (reference diagnostics, Table~\ref{tab:conv}) & $3$ (seeds $9000$--$9002$; $N$-sweep $9100{+}g$) \\
ridge $\lambda$ grid                    & $\{10^{-3},10^{-2},10^{-1},1,10,100,10^3,10^4\}$ \\
$\lambda$ selection                     & $80/20$ held-out split, then refit on full train \\
\bottomrule
\end{tabular}
\end{table}

\section{Heat kernel and denoiser spectrum}
\label{app:heat}
The wrapped Gaussian is the torus heat kernel: in the Fourier basis
$\{e^{ik\theta}\}$, the forward map multiplies mode $k$ by $e^{-\sigma^2k^2/2}$.
Thus high-frequency structure of the prior is attenuated first, and the odd mass
$\|\Pi_{\mathrm{odd}}m_\sigma\|^2$ that sets the parity floor decays with $\sigma$
(Table~\ref{tab:head}, column ``parity proxy''), vanishing as $\sigma\to\infty$
where $\theta_\sigma$ carries no information about $\theta_0$. The floor's
$\sigma$-profile is therefore a property of the diffused \emph{target}, not of the
quantum map, whose encoding $U$ is $\sigma$-independent.

\section{Numerical reference and convergence}
\label{app:ref}
\label{app:conv}
The numerical reference $R_{\mathrm{ref},\sigma}$ is the degree-3 both-parity
trigonometric bank (Appendix~\ref{app:feats}) fit with the CV-$\lambda$ ridge of
Appendix~\ref{app:ridge} at $N{=}8000$. Table~\ref{tab:conv} shows (at representative
noise levels $\sigma{\in}\{0.35,0.7\}$) it is the \emph{risk-minimizing} bank over
trigonometric degrees $2$--$5$ at this sample size:
degree $2\!\to\!3$ lowers the fitted risk, but degrees $4$--$5$ yield higher
finite-sample test risk, consistent with estimation variance from the added
high-degree features (up to $3488$ at degree $5$) outweighing any reduction in
approximation bias at $N{=}8000$ (the prior carries little degree-$4^+$ structure). Thus $R_{\mathrm{ref}}$ is the best-performing tested
reference at this $N$, with $R_{\mathrm{ref},\sigma}\ge R^\star_\sigma$ (the exact
Bayes risk, and hence the gap, unknown). Increasing $N$ lets a higher degree begin to
help ($N{:}8000\!\to\!16000$ lowers the degree-4 risk from $0.110$ to $0.097$ at
$\sigma{=}0.5$), but the reference is fixed at degree-3/$N{=}8000$ and applied
identically to every compared map, so no map is advantaged by the reference choice.
The selected even reference (degree-3 cosine-only bank) is fit the same way; the
empirical parity proxy is
$\hat P_\sigma=R^{\text{even-ref}}_\sigma-R_{\mathrm{ref},\sigma}$, independent of $Q$.

\begin{table}[h]\centering
\caption{Reference convergence (full both-parity trig bank, CV-ridge, $N{=}8000$, 3
graphs). At these representative $\sigma$, degree-3 minimizes the estimated risk;
degrees $4$--$5$ yield higher finite-sample test risk at this sample size. (At large
$\sigma$, degree-2 can be lower; degree-3 is retained across the grid for protocol
consistency, not as a pointwise minimizer.)}
\label{tab:conv}
\begin{tabular}{lccc}
\toprule
degree & features & $R_{\mathrm{ref}}$ ($\sigma{=}0.35$) & $R_{\mathrm{ref}}$ ($\sigma{=}0.7$) \\
\midrule
2 & 128  & 0.057 & 0.186 \\
3 & 576  & \textbf{0.049} & \textbf{0.166} \\
4 & 1696 & 0.059 & 0.198 \\
5 & 3488 & 0.083 & 0.272 \\
\bottomrule
\end{tabular}
\end{table}

\section{Ridge protocol}
\label{app:ridge}
All banks use the same readout: standardize features by train mean/std, then
closed-form ridge $W=(X^\top X+\lambda I)^{-1}X^\top Y$. The penalty $\lambda$ is
selected on a held-out $80/20$ split of the training set over
$\{10^{-3},\dots,10^4\}$, then $W$ is refit on the full training set at the selected
$\lambda$. Held-out $\lambda$ selection reduces regularization-selection confounds and
provides a consistent best-in-bank comparison across feature families (removing the
over/under-fit artifact that a single fixed $\lambda$ introduces). Reported risks are
test MSE on $Y\in\mathbb R^{16}$.

\section{Feature banks}
\label{app:feats}
Table~\ref{tab:appfeats} gives exact feature counts. Trigonometric banks are built
over phase combinations: degree-1 ($a_i$, $8$ phases), degree-2 ($a_i\pm a_j$,
$2\binom{8}{2}=56$ phases), degree-3 ($\pm a_i\pm a_j\pm a_k$ up to overall sign,
$4\binom{8}{3}=224$ phases); the even bank takes $\cos$ of each phase, the full
bank takes $\cos$ and $\sin$. Quantum maps use exact statevector expectations
(8 qubits, $256$-dim, \texttt{shots}=None) of an $\mathrm{RY}$ angle encoding with
a CNOT ring, data re-uploaded once (depth-1) unless the depth axis is swept.

\begin{table}[h]\centering
\caption{Feature-bank dimensions.}
\label{tab:appfeats}
\begin{tabular}{lll}
\toprule
bank & readout / phases & dim \\
\midrule
quantum even ($q_{\text{even}}$)   & $\expct{Z_i}$, $\expct{Z_iZ_j}$ & $36$ \\
quantum $X{+}Z$ ($q_{\text{full}}$) & $\expct{X_i},\expct{Z_i},\expct{Z_iZ_j},\expct{X_iZ_j},\expct{X_iX_j}$ & $100$ \\
classical even                     & $\cos$, degree-2 & $64$ \\
classical full                     & $\cos,\sin$, degree-2 & $128$ \\
selected even reference            & $\cos$, degree-3 & $288$ \\
numerical reference $R_{\mathrm{ref}}$ & $\cos,\sin$, degree-3 & $576$ \\
\bottomrule
\end{tabular}
\end{table}

\section{Design-axis and re-upload conditioning}
\label{app:design}
The design-axis ablation (Table~\ref{tab:e6}) varies readout order, entangler, and
re-upload depth independently ($\sigma{=}0.5$, $6$ graphs). Re-uploading does
\emph{not} change the readout dimension---the $\expct{Z_i}/\expct{Z_iZ_j}$ feature
count stays at $36$ for every depth---so the non-monotone depth trend is not a
capacity-count effect. Table~\ref{tab:reup} instruments it directly ($N{=}8000$, 3
graphs, ridge $\lambda$ grid extended to $10^4$): from depth $1$ to $3$ the
design-matrix condition number \emph{decreases} ($9.0\to4.7\to2.9$) and the selected
$\lambda$ \emph{rises} (to $1000$ at depth $3$, versus $100$ at depths $1$--$2$),
while \emph{both} train and test excess jump together (train $0.24\to0.23\to0.41$,
test $0.23\to0.22\to0.40$; the small train$-$test gap stays $\approx0.01$ at every
depth and does not widen at depth $3$). The joint increase of train and test excess,
with no widening train/test gap, despite improved conditioning and heavier
regularization, is consistent with representation misalignment rather than
ill-conditioning or overfitting: re-uploading pushes the
$36$ readouts to high Fourier frequency, poorly aligned with the smooth denoising
target, so a linear ridge head underfits even in-sample and $\lambda$ shrinks it
toward the mean. Depth trades useful low-frequency content for high-frequency
content the readout cannot use, which is a representation-alignment effect rather
than a capacity gain.

\begin{table}[h]\centering
\caption{Re-upload conditioning ($\sigma{=}0.5$, $N{=}8000$, 3 graphs; readout fixed
at $36$ dims). Deeper re-uploading \emph{improves} conditioning yet raises train and
test excess together, arguing against ill-conditioning and conventional train-test
overfitting as the primary explanation.}
\label{tab:reup}
\begin{tabular}{lccccc}
\toprule
depth & readout dim & cond.\ number & selected $\lambda$ & train excess & test excess \\
\midrule
1 & 36 & 9.0 & 100 & 0.241 & 0.230 \\
2 & 36 & 4.7 & 100 & 0.233 & 0.223 \\
3 & 36 & 2.9 & 1000 & 0.410 & 0.401 \\
\bottomrule
\end{tabular}
\end{table}

\section{Reproducibility}
\label{app:repro}
Every result uses fixed NumPy default-RNG streams, seeded per experiment as in
Table~\ref{tab:appdata}: headline decomposition $5000{+}g$, sample-complexity
sweep $2000{+}g$, design-axis ablation $3000{+}g$, $\sigma$-coupled comparison
$4000{+}g$, second-prior benchmark $7000{+}g$, MCMC-mixing diagnostics $6000{+}g$,
reference diagnostics $9000{+}g$ (reference $N$-sweep $9100{+}g$), together with the
configuration described in this appendix. All
quantum expectations are exact statevector values (no shot noise). The
implementation---benchmark generator, feature banks, ridge readout, and
decomposition instrument---will be released publicly so that alternative fixed
encoder/readout/entangler choices can be evaluated under the same protocol.

\end{document}